\documentclass{ifacconf}

\usepackage{graphicx}      
\usepackage{natbib}        
\usepackage{amsmath,amssymb} 

 \setlength{\textwidth}{18cm}

\newcommand \cK{{{\cal{K}}}}
\newcommand \cM{\ensuremath{{\mathcal{M}}}}

\newcommand{\gammainv}{$\gamma-$contracting cone} 
\newcommand\real{\ensuremath{{\mathbb R}}}
\newcommand\realn{\ensuremath{{\mathbb{R}^n}}}

\newcommand\mymatrix[2]{\left[\begin{array}{#1} #2 \end{array}\right]}

\newcommand{\re}{\mathbb{R}}

\newcommand{\calM}{\mathcal{M}}
\newcommand{\calN}{\mathcal{N}}

\newcommand{\calV}{\mathcal{V}}

\newcommand{\calK}{\mathcal{K}}

\newtheorem{remm}{Remark}

\newenvironment{proof}{\emph{Proof:} }{\hfill \hspace*{1pt} \hfill $\Box$}

\newtheorem{definition}{Definition}
\newtheorem{examp}{Example}
\newenvironment{example}{\begin{examp}\rm }{\hfill \hspace*{1pt} \hfill $\lrcorner$\end{examp}}

\begin{document}
\begin{frontmatter}

\title{Path-complete positivity of switching systems} 

\thanks[footnoteinfo]{
R.J. is currently on sabbatical at UCLA, Department of Electrical Engineering, Los Angeles, USA. His work is supported by the French Community of Belgium and by the IAP network DYSCO.  He is a Fulbright Fellow and a FNRS Fellow.}

\author[First]{F. Forni}, 
\author[Second]{R. M. Jungers},
\author[Third]{R. Sepulchre}

\address[First]{University of Cambridge, United Kingdom \\(f.forni@eng.cam.ac.uk).}
\address[Second]{Universit{\'e} catholique de Louvain, Belgium (raphael.jungers@uclouvain.be)}
\address[Third]{University of Cambridge, United Kingdom (r.sepulchre@eng.cam.ac.uk)}

\begin{abstract}                
The notion of path-complete positivity is introduced as a way to generalize the property of positivity from one LTI system to a family of switched LTI systems whose switching rule is constrained by a finite automaton. The generalization builds upon the analogy between stability and positivity, the former referring to the contraction of a norm, the latter referring to the contraction of a cone (or, equivalently, a projective norm). We motivate and investigate the potential of path-positivity and we propose an algorithm
for the automatic verification of positivity. 
\end{abstract}

\begin{keyword}
Positivity, Path-complete Lyapunov functions, Switching systems, Monotonicity, Perron-Frobenius theory.
\end{keyword}

\end{frontmatter}

\section{Introduction}

Positivity is a classical concept of linear system theory. It originates in the many examples of system dynamics whose state variables
remain positive along trajectories, and finds its theoretical foundations in Perron-Frobenius theory.  In a nutshell, under mild assumptions, the solutions of a positive system converge
to a dominant eigendirection in the positive orthant \cite{Luenberger1979}. Positivity has known a renewed interest in the recent years
for its advantageous computational scalability over general linear systems \cite{Rantzer2015}. As a geometric concept, positivity is primarily about the contraction
of a cone under the action of a linear map. The positive orthant is a cone of special interest, but Perron-Frobenius theory owes fundamentally
to the geometric contraction of a cone more than to an algebraic property of matrices with positive elements. \\
It is the same contraction property
that makes positivity the infinitesimal (or differential) characterization of monotonicity : the order preserving property of a monotone map
is equivalent to a positivity property  for the linearized map. This geometric viewpoint on positivity is at the root of the differential positivity theory
recently introduced in \cite{Forni2016} to characterize and study the asymptotic properties of nonlinear systems whose trajectories infinitesimally contract a smooth  cone  field. 
It has proven quite insightful to think of differential positivity as an analog of differential stability, or contraction analysis. In one case, one studies
the contraction of a smooth {\it norm} field, e.g. a Riemannian metric, while in the latter case, one studies the infinitesimal contraction of a {\it cone} field. This insight
points to a basic but profound similitude between stability and positivity : two contraction properties, that only differ by the geometric nature of the object that is contracted.

The present paper draws upon this analogy to generalize the concept of positivity from a single matrix (or linear operator) to a family of matrices. Such a generalization
has received considerable attention in the context of stability, but much less in the context of positivity. In particular, we focus in the present paper on the recent 
framework of path-complete Lyapunov analysis, which is a unifying approach to study the stability of a switched system whose switching rule is constrained by a finite automaton. 
Our goal here is to mimick this framework when the \emph{norm contraction} underlying stability is replaced by a \emph{cone contraction} underlying positivity.

The paper is organized as follows. In Section \ref{sec:positivity}, we briefly recall the notion of positivity, and its links with stability.  Then, in Section \ref{sec:contraction}, we naturally draw on this parallel to introduce our main concept: path-complete positivity. In Section \ref{section:behavior} we explain what this concept implies in terms of dynamical systems and control, and finally Section \ref{section:algo} touches upon the algorithmic problem of recognizing this property for a given set of matrices.

\section{Positivity versus stability} \label{sec:positivity}

Both stability and positivity are classical notions
in linear systems analysis. We review basic notations and terminology and 
stress the analogy  between these two properties in the elementary
context of a linear time-invariant (LTI) system $x^+ = Ax$. 

Stability refers to the invariance of a norm, i.e. a {\it ball} in the state-space.
The restriction to {\it quadratic}
norms $|x|_P := \sqrt{x^T P x}$ (where $P$ is a positive definite
matrix) is no loss of generality for LTI systems, in which case the invariance 
condition corresponds to the (Lyapunov)   inequality 
$$
 A^T P A - \gamma P \preceq 0, \; 0 \le \gamma \le 1.
$$
The case $\gamma = 1$ only ensures invariance (i.e. Lyapunov stability) whereas the
case  $\gamma < 1$ ensures contraction (i.e. exponential stability). In Lyapunov
analysis, the norm $V(x) := x^T P x$ is also called a (quadratic) Lyapunov function.

Fundamentally, positivity is the analog property when the ball is replaced by a cone. 
In this paper, a cone $\calK \subseteq \realn$   always means a convex pointed solid cone.
Recall that a set $\calK$ is a convex cone
if $\alpha x + \beta y \in \calK$ for all $x,y \in \calK$ and
all $\alpha,\beta > 0$. Pointed means $-\calK \cap \calK = \{0\}$.
$\calK$ is solid if it  contains $n$ linearly independent vectors.

A linear system is \emph{positive} with respect 
to $\calK$ (in short, $\calK$-positive) if 
\begin{equation*}
 A \calK \subseteq \calK .
\end{equation*}

Positivity only ensures the invariance condition, while
    \emph{strict positivity} also enforces {\it contraction},  by requiring that
the boundary of the cone is mapped
into the interior of the cone
\begin{equation*}
 A \calK \subseteq \mbox{int}\calK .
\end{equation*}

Positivity has a natural metric characterization based on the Hilbert metric
$d_\calK$ associated to the cone $\calK$.
\begin{definition}\cite{Bushell1973}\label{def:hilbert} Given a cone $\cK \in \re^n,$  the corresponding \emph{Hilbert metric} is given by 
\begin{equation*}
d_\calK(x,y) := \log \left(\frac{M_\calK(x|y) }{m_\calK(x|y) }\right)\qquad \forall x,y \in \calK 
\end{equation*}
where
\begin{align*}
M_\calK(x|y) 
& = \inf \{\lambda \,|\, \lambda y - x \in \calK \} 
= \inf \{ \lambda \,|\, \lambda y \in x+ \mathrm{bdr} \calK \}; \\
m_\calK(x|y) 
& = \sup \{ \mu\,|\, x - \mu y \in \calK \} 
=  \sup \{\mu\,|\, \mu y \in  x - \mathrm{bdr} \calK \}.
\end{align*}
We take $M_\calK(x|y) = \infty$ if $\forall \lambda>0,\, \lambda y  \notin x+ \calK$. \end{definition}
The Hilbert metric is in fact a \emph{distance among rays of the cone},
satisfying the property
$d_\calK(\alpha x, \beta y) = d_\calK(x,y)$ for any positive scaling $\alpha$ and $\beta$.  
It is theferore a distance in the projective space.  In short, contraction of a ball is measured by
a norm distance, whereas contraction of a cone is measured by a projective distance.
The Hilbert metric characterizes the contraction of a cone in the same way as a Lyapunov function characterizes
the contraction of a ball, as shown in the following theorem. 
\begin{thm}
\label{bushell}
\cite{Bushell1973} 
Consider a matrix $A\in \re^{n\times n}.$ If $A$ is $\cal K$-positive, then
there exists $\gamma<1$ such that for any $x,y \in \calK$
\begin{equation}
\label{eq:gamma_contraction}
d_\calK(Ax,Ay) \leq \gamma d_\calK(x,y) \ .
\end{equation}
Moreover, the smallest $\gamma$ satisfying the equation above satisfies
$$\gamma =  \tanh \frac{1}{4} D^A_\calK $$ where
$D^A_\calK := \sup_{x,y \in \mathrm{int} \mathcal{K}} d_\calK (Ax,Ay) \ .$
\end{thm}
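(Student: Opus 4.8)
The plan is to establish \eqref{eq:gamma_contraction} in two stages: first the non-expansive bound (which yields $\gamma=1$ for free), then the sharp constant, which is the analytic heart. For the notation, given $x,y\in\calK$ write $u\preceq v$ whenever $v-u\in\calK$. By Definition~\ref{def:hilbert}, $m:=m_\calK(x|y)$ and $M:=M_\calK(x|y)$ are exactly the tightest scalars with $m\,y\preceq x\preceq M\,y$, and $d_\calK(x,y)=\log(M/m)$. Since $A\calK\subseteq\calK$, the map $A$ preserves $\preceq$, so applying $A$ to the sandwich gives $m\,Ay\preceq Ax\preceq M\,Ay$, whence $m_\calK(Ax|Ay)\ge m$ and $M_\calK(Ax|Ay)\le M$. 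Therefore $d_\calK(Ax,Ay)\le\log(M/m)=d_\calK(x,y)$: every $\calK$-positive map is non-expansive, which already proves \eqref{eq:gamma_contraction} with $\gamma=1$.

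To extract the exact constant I would exploit the boundary contacts hidden in the tight sandwich. The extremality of $m,M$ forces $w_1:=My-x$ and $w_2:=x-my$ to lie on $\mathrm{bdr}\,\calK$, and a direct computation gives the representations $(M-m)\,y=w_1+w_2$ and $(M-m)\,x=m\,w_1+M\,w_2$. Setting $z_i:=Aw_i\in\calK$ and using the scale-invariance of $d_\calK$, this rewrites the quantity of interest as $d_\calK(Ax,Ay)=d_\calK\big(m z_1+M z_2,\;z_1+z_2\big)$. The crucial gain is that $z_1,z_2$ belong to the range of $A$, so (by continuity of $d_\calK(A\cdot,A\cdot)$) $d_\calK(z_1,z_2)\le D^A_\calK$. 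Since the Hilbert distance between points of $\mathrm{span}(z_1,z_2)$ depends only on the intersection of $\calK$ with that plane, I can pass to the two-dimensional section, where $\calK$ becomes a wedge and, after a projective change of coordinates, the situation is identical to the one posed in $\re^2_+$ with $z_1,z_2$ represented by $(\bar m,\bar M)$ and $(1,1)$, where $\bar m\le\bar M$ and $\log(\bar M/\bar m)=d_\calK(z_1,z_2)$.

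The last step is the genuinely analytic one. In $\re^2_+$ the Hilbert distance is $d(p,q)=|\log(p_1q_2/p_2q_1)|$, so with $\rho:=m/M$ (and $d_\calK(x,y)=\log(1/\rho)$) the target reduces to the scalar inequality
\[
\Big|\log\frac{(\bar m+1)(\rho\bar M+1)}{(\bar M+1)(\rho\bar m+1)}\Big|\;\le\;\tanh\!\Big(\tfrac14\log\tfrac{\bar M}{\bar m}\Big)\,\log\tfrac1\rho ,
\]
which I would prove by maximizing the left-hand ratio over $\rho\in(0,1]$ at fixed $\bar m,\bar M$. The key algebraic simplification is the identity $\tanh\!\big(\tfrac14\log(\bar M/\bar m)\big)=\frac{\sqrt{\bar M}-\sqrt{\bar m}}{\sqrt{\bar M}+\sqrt{\bar m}}$, after which the supremum is located at an interior critical point and evaluates exactly to this value. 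Chaining the estimates and using $d_\calK(z_1,z_2)\le D^A_\calK$ with the monotonicity of $\tanh$ yields $d_\calK(Ax,Ay)\le\tanh(\tfrac14 D^A_\calK)\,d_\calK(x,y)$, this $\gamma$ being strictly below $1$ as soon as $D^A_\calK<\infty$ (which holds, in finite dimension, when $A$ maps $\calK$ into $\mathrm{int}\,\calK$). Sharpness of the formula — that no smaller $\gamma$ works — I would obtain by choosing $x,y$ whose images realize both the diameter $D^A_\calK$ and the optimal $\rho$, so the two-dimensional inequality becomes an equality.

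The step I expect to be the main obstacle is the combination of the two-dimensional reduction and the scalar maximization: one must argue carefully that the Hilbert distance is genuinely governed by the planar wedge (legitimizing the $\re^2_+$ model), and then carry out the one-variable optimization so that it produces exactly $\tanh(D^A_\calK/4)$ rather than a loose upper bound.
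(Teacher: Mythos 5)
The paper does not prove this theorem: it is quoted verbatim from the cited reference \cite{Bushell1973} (it is the classical Birkhoff--Hopf contraction theorem), so there is no in-paper argument to compare against. Your reconstruction follows essentially the standard route of that reference and is sound in outline: the order-preservation argument for non-expansiveness is correct; the decomposition $(M-m)y=w_1+w_2$, $(M-m)x=mw_1+Mw_2$ with $w_1=My-x$, $w_2=x-my$ on $\mathrm{bdr}\,\calK$ is the right device; the restriction of $d_\calK$ to the planar section $\mathrm{span}(z_1,z_2)$ does reduce matters to a wedge isomorphic to $\re^2_+$, and the resulting expression depends only on $\rho=m/M$ and on the cross-ratio $e^{d_\calK(z_1,z_2)}$, which legitimizes your normalization $z_2=(1,1)$, $z_1=(\bar m,\bar M)$. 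Your scalar inequality is exactly the mean value theorem applied to $\psi(u)=\log\frac{\bar M e^u+1}{\bar m e^u+1}$, whose derivative is maximized at $e^u=(\bar M\bar m)^{-1/2}$ with value $\frac{\sqrt{\bar M}-\sqrt{\bar m}}{\sqrt{\bar M}+\sqrt{\bar m}}=\tanh\bigl(\tfrac14\log(\bar M/\bar m)\bigr)$; this is cleaner than an ad hoc one-variable optimization and closes the step you flagged as the main obstacle. Two caveats. First, as stated the theorem is slightly too strong: mere $\calK$-positivity only gives $\gamma=1$ in general, and $\gamma<1$ requires $D^A_\calK<\infty$ (strict positivity), a point the paper itself concedes immediately after the statement and which you handle correctly. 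Second, the ``smallest $\gamma$'' claim (that $\tanh(\tfrac14 D^A_\calK)$ is attained as a supremum of ratios) is only sketched in your last step; making it rigorous requires exhibiting sequences $x_k,y_k$ whose images simultaneously approach the projective diameter and the extremal $\rho$, which is the one genuinely incomplete piece of your write-up, though it is a known part of the classical proof rather than an obstruction.
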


Clearly $\gamma < 1$ whenever $D^A_\calK < \infty$, that is,
whenever $A  \calK \subseteq \mathrm{int} \calK $.
In what follows we will say that $\calK$ is a
$\gamma$-{\emph{contracting}} cone
for the linear map $A$ whenever \eqref{eq:gamma_contraction} holds.

Proving the contraction of a map is a fundamental way of characterizing the
existence of a fixed point.  Contraction of a ball implies that the iterated map
eventually shrinks to a point. This is the essence of Lyapunov theory. Likewise,
contraction of a cone implies that the iterated map eventually shrinks to a ray (a point
in the projective space). This is the essence of Perron-Frobenius theory. 

For a LTI system, both stability and positivity have a spectral characterization. 
Exponential stability (or contraction) means that all the eigenvalues have a strictly negative
real part, while strict positivity (or projective contraction) means that the matrix $A$ has a
dominant eigenvector in the interior of the cone.

\section{Contraction and path-contraction} \label{sec:contraction}

There exists an extensive literature devoted to generalizing the stability of a single matrix (in the
sense recalled in the previous section) to a finite (or even compact) set of matrices
$A_\sigma \in \real^{n\times n}$,
$\sigma \in \Sigma := \{1,\dots,N\} \subset \mathbb{N}$. See for instance \cite{Liberzon2003,Jungers2009} and references therein.
One obvious application is the stability analysis
of switched systems $x^+ = A_\sigma x$ where the update rule is allowed to switch among the considered set of matrices.
Drawing upon the analogy stressed above, this section generalizes positivity to a set of matrices.

\subsection{Uniform positivity}

A straightforward extension with respect to the previous section is to study uniform positivity
of a family of matrices with respect to a common cone.

Not surprisingly, strict positivity of each matrix (possibly with respect to different cones)
is necessary but not sufficient for uniform strict positivity. And  proving the existence of a 
common invariant cone for a set of linear dynamics is hard. Actually, the existence question is algorithmically
undecidable \cite{Protasov2010,Protasov2012,Rodman2010}, very much for the same reasons as
its companion question of uniform norm contraction (see \cite{BlTsTBOA}, or \cite[Section 2.2.3]{Jungers2009}).

It would certainly be of interest to revisit the large body of literature on uniform stability  in the light of the
 analog question of uniform positivity.
Even the question of defining a joint projective radius for a family of positive systems in analogy to the
`joint spectral radius' defined for a family of stable systems seems valuable and not entirely straightforward.
We do not pursue this question in the present paper and leave it for future research.

\subsection{Constrained switching systems}

Uniform positivity or uniform stability is too conservative of a property
for the many applications where the  switching rule is not arbitrary.
This has long been acknowledged in the literature of switched systems,
see for instance  \cite{EsPhTMAS,BlFeSAOD,LiAnSASO}, where the 
permissible sequence of switches is typically modeled by a finite automaton. 

Consider a class of switching linear systems 
represented by 
\begin{equation} \label{eq-ss}
x(k+1) = A_{\sigma(k)} x(k)  
\end{equation}
where $\sigma \in \Sigma := \{1,\dots,N\} \subset \mathbb{N}$
and each $A_\sigma$ is a $n\times n$ matrix.
For a switching signal $\sigma(\cdot):\mathbb{N} \to \Sigma$ and any initial condition $x_0\in \re^n,$ the unique solution
$x(\cdot): \mathbb{N} \to \realn $ of \eqref{eq-ss}
is called a \emph{trajectory of }the system.
We say that the system is a \emph{constrained switching system}
if the sequences $\sigma(0)\sigma(1)\dots$ generated by
the switching signal $\sigma(\cdot)$ belongs to a regular language $L_r$. 

Thus, $\sigma(\cdot)$ is generated by any finite-state automaton
$(Q,\Sigma,\delta)$ that accepts the same regular language $L_r$, 
where $Q$ is the set of states,
$\Sigma$ is the alphabet and $\delta\subseteq Q\times\Sigma\times Q$ is the 
transition relation.
We say that  such an automaton is \emph{path-complete} 
to emphasize the fact that its paths capture a complete
description of the allowed behaviors of the switching signal.
We will denote any labeled transition 
by the compact notation $i\!\stackrel{\sigma}{\to}\!j\in\delta$.
A finite sequence of transitions from $i$ to $j$ will be represented by
$i\stackrel{\sigma_1\dots\sigma_r}{\longrightarrow}j$.

The complexity of the switching behavior is modulated 
by the automaton. 
An example is in Figure \ref{fig:automata}.
\emph{Arbitrary switches} between two matrices $A_0$ and $A_1$ 
are easily captured by the automaton  on the left.
In contrast, the automaton on the right
enforces a switching behavior with a strict alternation between $0$ and $1$.
A mixed situation is provided by the automaton in the middle,
whose switches sequences allow for any repetition of $1$ separated by 
isolated zeros.

\begin{figure}[htbp]
\begin{center}
\includegraphics[width=0.85\columnwidth]{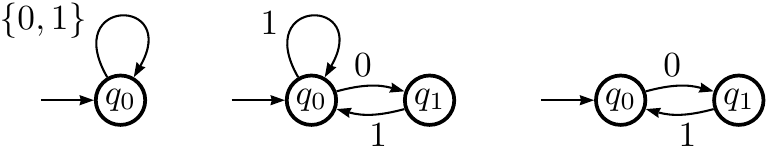}  
\vspace{-2mm}
\caption{Automata with different path restrictions.} 
\label{fig:automata}
\end{center}
\end{figure}

The case of unconstrained switches is typical of robust analysis
where parametric uncertainties are modeled via nondeterministic
switches among a family of  linear systems \cite{Liberzon2003}.
Constrained switches arise from literature on hybrid/cyber-physical 
systems \cite{EsPhTMAS,BlFeSAOD,LiAnSASO}. 
In constrained switching systems, specific sequences of operations 
are captured by suitable branches of the automaton.
Restrictions
on paths could be used to model forms of ergodicity in the 
sequence of matrix operations, or to model the alternation 
between periods of local/isolated operations and periods of 
collective computations.

\subsection{Path-complete Lyapunov functions}
 
Since the nineties, several methods have been proposed for the stability analysis 
of switched systems with or without restrictions on the switching rules 
 \cite{BlFeSAOD,DaRiSAAC,EsLeCOLS,branicky1998multiple}.
We briefly summarize the recently proposed framework  of path-complete Lyapunov functions, \cite{AhJuJSRA}, that provides a unifying 
approach, and generalizes these techniques.

\begin{definition} 
Consider a constrained switching system and let $(Q,\Sigma,\delta)$ be
any path-complete automaton.
A \emph{path-complete Lyapunov function} is a multiple 
Lyapunov function given by a finite set of homogeneous 
positive definite functions $(V_i)_{i \in Q}$, 
$V_i : \re^n \mapsto \re^+$, such that  
$$
V_j(A_\sigma x)\leq \gamma V_i(x). 
$$
for each transition $i\!\stackrel{\sigma}{\to}\!j \in \delta$ 
and each $x\ \in \realn$.
\end{definition}

 The reason of this definition lies in the following theorem.
\begin{thm}[\cite{AhJuJSRA}]\label{thm-pclyap} $ $\\ 
Consider a constrained switching system and let $(Q,\Sigma,\delta)$ be
any path-complete automaton.
The existence of a \emph{path-complete Lyapunov function}
for $\gamma = 1$ is a valid criterion for the stability of the switching system.  
Asymptotic stability requires $0 \leq \gamma < 1$.
\end{thm}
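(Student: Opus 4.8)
The plan is to track an arbitrary trajectory through the automaton and telescope the defining inequalities. First I would fix any trajectory $x(\cdot)$ of \eqref{eq-ss} whose switching word $\sigma(0)\sigma(1)\cdots$ lies in $L_r$, together with an arbitrary horizon $K\in\mathbb{N}$. Because the automaton $(Q,\Sigma,\delta)$ is path-complete, the finite prefix $\sigma(0)\sigma(1)\cdots\sigma(K-1)$ is the label of at least one path $q_0\stackrel{\sigma(0)}{\to}q_1\stackrel{\sigma(1)}{\to}\cdots\stackrel{\sigma(K-1)}{\to}q_K$, with each $q_k\!\stackrel{\sigma(k)}{\to}\!q_{k+1}\in\delta$. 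This is the step that genuinely uses path-completeness: it guarantees that \emph{every} admissible switching sequence can be read along the edges of the automaton, so that the finite collection $(V_i)_{i\in Q}$ is enough to certify every trajectory.

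Next I would apply the Lyapunov inequalities edge by edge along this path. By definition of a path-complete Lyapunov function, for each $k$ we have $V_{q_{k+1}}(x(k+1)) = V_{q_{k+1}}(A_{\sigma(k)} x(k)) \le \gamma\, V_{q_k}(x(k))$, and a one-line induction telescopes this into $V_{q_K}(x(K)) \le \gamma^K V_{q_0}(x(0))$.

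The remaining, and technically most delicate, step is to convert this bound on the value $V_{q_K}(x(K))$, whose index varies with $K$, into a bound on the norm $|x(K)|$. Here I would use that $Q$ is finite and that each $V_i$ is continuous, positive definite, and homogeneous of a common degree $d$: on the unit sphere every $V_i$ attains a strictly positive minimum and maximum, and taking the extrema over the finite set $Q$ yields uniform constants $0<\alpha\le\beta$ with $\alpha|x|^d \le V_i(x) \le \beta|x|^d$ for all $i\in Q$ and all $x\in\realn$. Combining with the telescoped inequality gives $\alpha|x(K)|^d \le \gamma^K\beta|x(0)|^d$, hence $|x(K)| \le (\beta/\alpha)^{1/d}\gamma^{K/d}|x(0)|$, a bound whose right-hand side no longer refers to the (horizon-dependent) path.

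Reading off the two cases then finishes the proof. For $\gamma=1$ the right-hand side is the fixed multiple $(\beta/\alpha)^{1/d}|x(0)|$, so every trajectory stays bounded by a constant times its initial condition, which is (uniform) Lyapunov stability. For $0\le\gamma<1$ the factor $\gamma^{K/d}\to 0$, so $|x(K)|\to 0$ geometrically for every admissible switching signal, i.e. (uniform, in fact exponential) asymptotic stability. I expect the main obstacle to be exactly the index-varying conversion of the third paragraph: one must verify that the uniform sandwiching constants $\alpha,\beta$ exist, and it is precisely the finiteness of $Q$ together with the homogeneity and continuity of the $V_i$ that make this possible, since without them a certificate that changes at every step would not control the trajectory uniformly.
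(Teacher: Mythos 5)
Your argument is correct, and in fact the paper gives no proof of this theorem at all --- it is quoted directly from \cite{AhJuJSRA}, where the standard proof is exactly the one you give: path-completeness supplies an automaton path reading any admissible prefix, the edge inequalities telescope to $V_{q_K}(x(K))\le\gamma^K V_{q_0}(x(0))$, and the finitely many homogeneous positive definite $V_i$ are uniformly sandwiched between powers of the norm. The only cosmetic caveat is that your uniform constants $\alpha,\beta$ presuppose the $V_i$ share a common homogeneity degree $d$ (and are continuous), which is the standing convention in that framework; without a common degree one still gets boundedness and convergence to zero, just not the clean exponential rate in the single exponent $K/d$.
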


We remark that the regular language that constrains the switches
of a constrained switching system can be generated by infinitely
many automata and each of these provides a different set of path-complete
Lyapunov functions. The selection of a suitable automaton is a degree
of freedom in path-complete analysis. The number of states of the
automaton allows to balance the complexity of the multiple Lyapunov function with the computational efficiency.

\subsection{Path-complete positivity}

We follow the approach of path-complete Lyapunov functions to define the corresponding
notion for positive systems. Once again, the key step is to substitute cones to norms.

\begin{definition}
Consider a constrained switching system and let $(Q,\Sigma,\delta)$ be
any path-complete automaton for this system.
The constrained switching system is \emph{path-complete positive}
with respect to the set of cones  
$$
\overline{\calK} := \{\calK_q\,|\, q \in Q\}
$$
if\footnote{Each $\calK_i$ is a pointed, convex, solid cone.}
\begin{equation*}
A_\sigma \calK_i \subseteq \calK_j
\end{equation*}
for each transition $i \stackrel{\sigma}{\to} j \in \delta$. 
\emph{Strict path-complete positivity} further requires that 
\begin{equation*}
A_\sigma \calK_i \subseteq \mbox{int}\calK_j
\end{equation*} 
for each transition $i \stackrel{\sigma}{\to} j \in \delta$.
\end{definition}

The definition above reduces to positivity when 
each cone in the set $\overline{\calK}$ is identical.
Path-complete positivity is a proper generalization of positivity:
Example \ref{example:pos_vs_pathpos} below discusses the case of a path-complete positive switching system 
that cannot be positive with respect to a common cone.

\begin{example}
\label{example:pos_vs_pathpos}
Consider the constrained switching system $x^+ = A_\sigma x$ with
\begin{equation*}
A_0 = \mymatrix{cc}{5 & 0 \\ 0 & 1} 
\qquad
A_1 = \mymatrix{cc}{1 & 0 \\ 0 & 3} \ ; 
\end{equation*}
and suppose that the automaton in Figure \ref{fig:automata_example_pvpp} is
path complete.
\begin{figure}[htbp]
\begin{center}
\includegraphics[width=0.25\columnwidth]{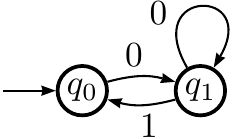}  
\end{center}
\vspace{-2mm}
\caption{One of the automata generating 
$\sigma$ in Example \ref{example:pos_vs_pathpos}.} 
\label{fig:automata_example_pvpp}
\end{figure}

The system cannot be strictly positive with respect to a common cone 
since the dominant eigenvector $e_1$ of the matrix $A_0$ 
is a non-dominant eigenvector of the other matrix $A_1$ and viceversa.
It turns out that the system is strictly path-complete positive
with respect to the family of cones
$\bar{\calK} := \{\calK_0, \calK_1\}$ where
\begin{align*}
\calK_0 & := \{ x_1 \geq 0, |x_2| \leq x_1 \} \\
\calK_1 & := \{ x_1 \geq 0, |x_2| \leq x_1/4\} \ .
\end{align*}
One can check that the path-complete inclusions are satisfied with such values of $\cK_0,\cK_1:$ Indeed, following the automaton paths, 
any $(x_1,x_2) \in \calK_0$ is mapped into
$(x_1^+, x_2^+) \in \mathrm{int}\calK_1$ by $A_0$:
$
A_0 \calK_0 
= \big\{x_1^+ = 5x_1 \geq 0 , |x_2^+| = |x_2| \leq x_1 =x_1^+/5 < x_1^+/4\big\}  
\subseteq \mathrm{int} \calK_1
$.
In a similar way, any $(x_1,x_2) \in \calK_1$ 
is mapped into $(x_1^+, x_2^+) \in \mathrm{int} \calK_1$
by $A_0$:
$
A_0 \calK_1 
 = \{x_1^+ = 5x_1 \geq 0 , 
|x_2^+| = | x_2| \leq x_1/4  = x_1^+/20< x_1^+/4\} 
\subseteq \mathrm{int} \calK_1
$;
and any $(x_1,x_2) \in \calK_1$ 
is mapped into $(x_1^+, x_2^+) \in \mathrm{int} \calK_0$
by $A_1$:
$
A_1 \calK_1 
 = \{x_1^+ = x_1 \geq 0 , 
|x_2^+| = |3 x_2| \leq \frac{3}{4}x_1  = \frac{3}{4}x_1^+< x_1^+\} 
\subseteq \mathrm{int} \calK_0
$.
\end{example}

The definition of path-complete positivity suggests that one of the advantages of 
path-complete positivity is that a temporary ``excess'' of contraction 
can be ``stored'' by narrowing cones. A temporary  ``lack'' of 
contraction can be  ``balanced'' by widening cones.
The example shows that this approach can be effective on 
finite paths: weak contraction at some steps is balanced 
by the excess of contraction at some other steps.

\section{The asymptotic behavior of path-positive systems} 
\label{section:behavior}

The connection between positivity and projective contraction
of the Hilbert metric is now generalized to path-complete positive systems.

\begin{thm}
\label{thm:projective_contraction}
Consider a constrained switching system, let $(Q,\Sigma,\delta)$ be
any path-complete automaton, and suppose that 
the constrained switching system is \emph{path positive}
with respect to the set of cones  
$
\overline{\calK} := \{\calK_q\,|\, q \in Q\}
$.
Then, there exists $0 \leq \gamma \leq 1$ such that, for any 
transition $i \stackrel{\sigma}{\to}j$ of the automaton,
\begin{equation} \label{pcp-gamma}
d_{\calK_j}(A_\sigma x,A_\sigma y) \leq \gamma d_{\calK_i}(x, y)  \qquad \forall x,y \in \calK_i.
\end{equation}
Furthermore, \emph{strict path positivity} guarantees $0 \leq \gamma < 1$.
\end{thm}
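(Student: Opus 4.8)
The plan is to reduce the statement to a single transition and, from there, to an inequality between two nested cones, where the result is essentially Birkhoff's contraction theorem. Since $Q$ and $\Sigma$ are finite, the relation $\delta$ is finite, so it suffices to produce for each transition $i\stackrel{\sigma}{\to}j$ a constant $\gamma_{ij}^\sigma$ satisfying \eqref{pcp-gamma} and then set $\gamma:=\max_{i\stackrel{\sigma}{\to}j\in\delta}\gamma_{ij}^\sigma$, a maximum that is attained. I therefore fix one transition and write $A:=A_\sigma$, $\calK_1:=\calK_i$, $\calK_2:=\calK_j$, with $A\calK_1\subseteq\calK_2$ (resp. $A\calK_1\subseteq\mathrm{int}\,\calK_2$). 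Two elementary properties of the Hilbert metric drive everything. First, monotonicity under inclusion: reading Definition \ref{def:hilbert}, if $\calC\subseteq\calK_2$ then $M_{\calK_2}(u|v)\le M_{\calC}(u|v)$ and $m_{\calK_2}(u|v)\ge m_{\calC}(u|v)$ for $u,v\in\calC$, hence $d_{\calK_2}(u,v)\le d_{\calC}(u,v)$. Second, projective invariance: for invertible $A$ one checks directly from the definitions of $M$ and $m$ that $d_{A\calK}(Ax,Ay)=d_{\calK}(x,y)$, since $A$ sends rays of $\calK$ to rays of $A\calK$.

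For the non-strict case I would simply combine these two facts. Assuming $A$ invertible, the image $A\calK_1$ is a pointed solid convex cone with $A\calK_1\subseteq\calK_2$, so monotonicity followed by invariance gives, for all $x,y\in\calK_1$,
$$d_{\calK_2}(Ax,Ay)\le d_{A\calK_1}(Ax,Ay)=d_{\calK_1}(x,y),$$
i.e. \eqref{pcp-gamma} holds with $\gamma_{ij}^\sigma=1$, and the finite maximum yields $0\le\gamma\le1$. If some $A_\sigma$ is singular its image is lower dimensional and distances only shrink further, so the inequality still follows from the general form of Birkhoff's theorem.

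For the strict case the key idea is to convert the two-cone problem into a nested-cone problem. With $A$ invertible, set $\calK_2':=A^{-1}\calK_2$, again a pointed solid convex cone, and observe that $A\calK_1\subseteq\mathrm{int}\,\calK_2$ is equivalent to $\calK_1\setminus\{0\}\subseteq\mathrm{int}\,\calK_2'$, so the base of $\calK_1$ lies in the interior of $\calK_2'$. Since $A$ carries $\calK_2'$ onto $\calK_2$, projective invariance gives $d_{\calK_2}(Ax,Ay)=d_{\calK_2'}(x,y)$, and \eqref{pcp-gamma} becomes the purely geometric inequality $d_{\calK_2'}(x,y)\le\gamma\,d_{\calK_1}(x,y)$ for $x,y\in\calK_1$, with $A$ eliminated. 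I would then bound the projective diameter $\Delta:=\sup_{x,y\in\calK_1}d_{\calK_2'}(x,y)$: choosing a compact cross-section $B$ of $\calK_1$, the set $B$ lies in the open cone $\mathrm{int}\,\calK_2'$ on which $d_{\calK_2'}$ is finite and continuous, hence bounded on the compact $B\times B$, and ray-invariance upgrades this to $\Delta<\infty$. Birkhoff's contraction estimate then yields $d_{\calK_2'}(x,y)\le\tanh(\Delta/4)\,d_{\calK_1}(x,y)$, so $\gamma_{ij}^\sigma=\tanh(\Delta/4)<1$, and the finite maximum over transitions gives $0\le\gamma<1$.

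The main obstacle is precisely this last estimate, the strict nested-cone contraction: monotonicity alone gives only non-expansion, and the strict gain is the genuine content of Birkhoff's theorem, resting on the cross-ratio (four-point) argument underlying Theorem \ref{bushell}. It is tempting to deduce it from Theorem \ref{bushell} by a change of variables, but this fails in general because two distinct cones $\calK_1$ and $\calK_2'$ need not be linearly isomorphic (e.g. a simplicial cone versus a Lorentz cone for $n\ge3$); one genuinely needs the two-cone form of Birkhoff's theorem, of which Theorem \ref{bushell} is the single-cone specialization, available in \cite{Bushell1973}. The remaining technical points, both routine, are the finiteness of $\Delta$, handled by the compactness argument above, and the invertibility of $A_\sigma$, the singular case being subsumed by the general theorem.
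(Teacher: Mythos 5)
Your proof is correct, and its first half coincides in substance with the paper's: the chain $m_{\calK_i}(x|y)\le m_{\calK_j}(A_\sigma x|A_\sigma y)\le M_{\calK_j}(A_\sigma x|A_\sigma y)\le M_{\calK_i}(x|y)$ that the paper borrows from Theorem 3.1 of \cite{Bushell1973} is exactly your ``monotonicity under inclusion plus projective invariance'' applied to $A_\sigma\calK_i\subseteq\calK_j$ --- and, stated directly for $M$ and $m$ (if $x-\mu y\in\calK_i$ then $A_\sigma x-\mu A_\sigma y\in\calK_j$), it needs no invertibility of $A_\sigma$, which disposes of the only loose end in your non-strict case. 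Where you genuinely diverge is the strict case. The paper stays with the Hopf oscillation formulation: it sets $\mathrm{osc}_{\calK_q}(x|y)=M_{\calK_q}(x|y)-m_{\calK_q}(x|y)$, invokes Theorems 4 and 5 of \cite{Bauer1965} to obtain an oscillation ratio $N_{ij}(A_\sigma)<1$ whenever $A_\sigma\calK_i\subseteq\mathrm{int}\,\calK_j$, and transfers this to the Hilbert metric via Lemma 3 of \cite{Bushell1973a}. You instead pull $\calK_j$ back through $A_\sigma^{-1}$ to reduce to two nested cones and apply Birkhoff's projective-diameter estimate $\tanh(\Delta/4)$. These are the two classical faces of the Birkhoff--Hopf theorem, and neither you nor the paper reproves the two-cone version: the paper explicitly remarks that Bauer's argument ``extends'' to operators between two different cones, and your appeal to \cite{Bushell1973} for the two-cone statement requires the same kind of extension, since the theorems there are stated for a map of a cone into itself. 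What your route buys is the explicit constant $\gamma=\max\tanh(\Delta/4)$ consistent with Theorem \ref{bushell}, together with an explicit compactness argument for the finiteness of the projective diameter --- a point the paper leaves implicit; what it costs is the invertibility hypothesis on each $A_\sigma$, which the paper's $M$/$m$ bookkeeping avoids entirely.
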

\begin{proof}
Following the proof argument for Theorem 3.1 in \cite{Bushell1973}, one shows that
path positivity guarantees
$
m_{\calK_i}(x|y) \leq 
m_{\calK_j}(A_\sigma x|A_\sigma y) \leq
M_{\calK_j}(A_\sigma x|A_\sigma y) \leq 
M_{\calK_i}(x|y) 
$
for each transition $i \stackrel{\sigma}{\to} j \in \delta$, 
which directly implies \eqref{pcp-gamma} for $\gamma = 1$. 

 For strict path positivity
\eqref{pcp-gamma} with $0 \leq \gamma < 1$ follows
by the proof argument of Theorem 3.2 in \cite{Bushell1973}.
For instance, for each $q\in Q$, define the oscillation
$\mathrm{osc}_{\calK_q}(x|y) := M_{\calK_q}(x|y) - m_{\calK_q}(x|y)$.
Theorems 4 and 5 in \cite{Bauer1965} show that
$\mathrm{osc}_{\calK_j}(A_\sigma x|A_\sigma y) \leq N_{ij}(A_\sigma) \mathrm{osc}_{\calK_i}(x|y)$
for each $i \stackrel{\sigma}{\to} j \in \delta$, where the oscillation ratio $0 \leq N_{ij}(A_\sigma) < 1$
if $A_\sigma \calK_i \subseteq \mathrm{int} \calK_j$.
This result is well known for positive operators from a cone into itself. 
The proof argument in \cite{Bauer1965} 
extends to the case of positive operators between two different cones. 
Finally, using the proof argument of Lemma 3 in \cite{Bushell1973a} one shows that
$d_{\calK_j}(A_\sigma x|A_\sigma y) \leq N_{ij}(A_\sigma) d_{\calK_i}(x|y) $
for each $i \stackrel{\sigma}{\to} j \in \delta$.
Thus, $\gamma := \max\nolimits\limits_{ i \stackrel{\sigma}{\to} j \in \delta} N_{ij}(A_\sigma) < 1$.
\end{proof}

At each transition $i\stackrel{\sigma}{\to} j$ strict positivity guarantees that
the linear map $A_\sigma$ is a contraction on the rays of the cones,
in the sense of the adapted Hilbert metrics $d_{\calK_i}$. It is easy to prove, by induction, that
any pair $(x(\cdot)$,$y(\cdot))$ of trajectories of the system associated
to the same switching signal $\sigma(\cdot)$ and such that
$x(0),y(0) \in \calK_{q(0)}\setminus\{0\}$ satisfy
\begin{equation}
\label{eq:contraction}
\lim_{k \to \infty} \left| \frac{x(k)}{|x(k)|} - \frac{y(k)}{|y(k)|} \right| = 0 \ .
\end{equation}

Equation \eqref{eq:contraction} makes clear that a strictly positive system 
asymptotically `forgets' its initial condition,
as it converges to a unique steady 
state solution in the projective space, for every switching signal. 

Note that the projective contraction property does not enforce convergence to a
fixed point. For example, a straightforward consequence of the theorem is 
that each cyclic path $q\stackrel{\sigma_1\dots\sigma_r}{\longrightarrow}q $
defines a corresponding  path-dependent Perron-Frobenius eigenvalue and eigenvector,
$\lambda_{{\sigma_1  \dots  \sigma_k}}$ and $v_{{\sigma_1  \dots  \sigma_r}}$,
such that  
$$
A_{\sigma_r} \dots A_{\sigma_1} v_{{\sigma_1  \dots  \sigma_r}}
=
\lambda_{{\sigma_1  \dots  \sigma_k}} v_{{\sigma_1  \dots  \sigma_r}}
$$
(since $\bar{A} := A_{\sigma_1},\dots,A_{\sigma_r}$
is necessarily a strictly positive matrix).
Denoting rays by
$[x] := \{\lambda x \,|\, \lambda > 0\}$, 
a simple permutation of indices shows that 
$
[v_{\sigma_2  \dots  \sigma_r\sigma_1}] = [A_{\sigma_1} v_{\sigma_1  \dots \sigma_r}]
$,
$
[v_{\sigma_3  \dots  \sigma_r\sigma_1\sigma_2}] = [A_{\sigma_2} v_{\sigma_2  \dots  \sigma_r\sigma_1}]
$, and so on. Indeed, all the path-dependent Perron-Frobenius eigenvectors on a cyclic path define
an invariant sequence of rays. Such sequence is also an attractor of the system.
Thus, trajectories along these cycles
either converge to zero or to a limit cycle of $r$ rays. In that sense, path-positivity retains the fundamental
contraction property of a positive system.

\section{Algorithms for deciding positivity}\label{section:algo}

Testing the existence of a common invariant or contractive cone is hard \cite{Protasov2010}. In fact Protasov proved that the question of whether a set of matrices has an invariant cone is Turing-undecidable. His construction suggests that the question is hard when the matrices share a common invariant linear subspace. For matrices that do not share a common invariant subspace, we algorithmically test whether a given set of matrices has a common $\gamma$-contracting cone, for a given contraction ratio $0 < \gamma < 1$. 
We only discuss the algorithm in the case of uniform
positivity and leave for future work  a generalization to path-complete positivity.

\subsection{Basic test}

A single matrix admits a contracting cone if and only if it has a leading eigenvector.
An obvious necessary condition for uniform strict positivity w.r.t. a common cone $\calK$ is therefore
that each system $A_\sigma$  has a leading eigenvector.
We introduce a  corresponding splitting of the state-space, which relies on the eigenstructure of $A_\sigma.$
\begin{definition}\label{def-splitting} 
For any positive matrix $A_\sigma,$  we define the \emph{invariant splitting} of $\re^n$ $(\calV_\sigma,\calN_\sigma)$ as the pair of two \emph{$A_\sigma-$invariant subspaces} of dimension $1$ and $n-1$ respectively.
$\calV_\sigma$ is defined as the span of the Perron Frobenius eigenvector of $A_\sigma$. 
$\calN_\sigma$ is the unique $n-1$ invariant subspace for $A_\sigma$ such that $\calV_\sigma \cap \calN_\sigma = \{0\}$
(for example $\calN_\sigma$ could be defined by the columns of the coordinate
transformation that brings $A_\sigma$ into its real Jordan form). 
\end{definition}An elementary necessary condition is as follows.
\begin{prop} \label{prop:plane}
If a cone $\cK$ is invariant for the matrix $A_\sigma$, then necessarily  $$ \cK \bigcap \calN_\sigma = \{0\}. $$
\end{prop}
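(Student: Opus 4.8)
The plan is to argue by contradiction, exploiting the fact that a \emph{contracting} cone forces every one of its rays to converge, in the projective sense, to the Perron--Frobenius ray $\calV_\sigma$, which by construction lies outside $\calN_\sigma$. First I would fix notation: let $v$ span $\calV_\sigma$, with $A_\sigma v = \lambda v$ and $\lambda>0$ the strictly dominant (leading) eigenvalue, and recall from Definition~\ref{def-splitting} that $\calN_\sigma$ is $A_\sigma$-invariant with $v\notin\calN_\sigma$. Reading ``invariant'' here in the sense relevant to this section, namely that $\cK$ is a contracting cone, $A_\sigma\cK\subseteq\mathrm{int}\,\cK$, one also has $v\in\mathrm{int}\,\cK$. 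Now suppose there were a nonzero $x\in\cK\cap\calN_\sigma$. Since both $\cK$ and $\calN_\sigma$ are invariant, the whole orbit $(A_\sigma^k x)_{k\ge0}$ stays in $\cK\cap\calN_\sigma$, and $A_\sigma^k x\in\mathrm{int}\,\cK$ for $k\ge1$ (in particular $A_\sigma^k x\neq0$, as $\cK$ is pointed).

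Second, I would invoke the contraction of Theorem~\ref{bushell}. Because $A_\sigma x\in\mathrm{int}\,\cK$ and $v\in\mathrm{int}\,\cK$, the distance $d_\cK(A_\sigma x, v)$ is finite, and iterating the contraction gives
$$ d_\cK(A_\sigma^{k} x, v) = d_\cK\!\big(A_\sigma^{k-1}(A_\sigma x),\,A_\sigma^{k-1} v\big) \le \gamma^{\,k-1}\, d_\cK(A_\sigma x, v)\ \longrightarrow\ 0, $$
where I used that $A_\sigma^{k-1}v=\lambda^{k-1}v$ lies on the ray $[v]$, so the Hilbert metric is unchanged. Since $d_\cK$ induces the usual topology of the projective space on $\mathrm{int}\,\cK$, this means the rays $[A_\sigma^k x]$ converge to $[v]$.

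Third, I would close the argument geometrically. Each $[A_\sigma^k x]$ lies in the \emph{closed} subspace $\calN_\sigma$, hence the limit ray $[v]$ lies in $\calN_\sigma$ as well; but $v\notin\calN_\sigma$, a contradiction. Therefore no such $x$ exists and $\cK\cap\calN_\sigma=\{0\}$.

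The main obstacle I anticipate is the precise meaning of ``invariant''. For \emph{mere} invariance $A_\sigma\cK\subseteq\cK$ the statement is false: the positive orthant is invariant under $\mathrm{diag}(3,2,1)$ yet meets $\calN_\sigma=\{x_1=0\}$ nontrivially. So the proof must genuinely use the strict/contracting hypothesis $A_\sigma\cK\subseteq\mathrm{int}\,\cK$, and the delicate point is ensuring that an orbit starting on $\partial\cK\cap\calN_\sigma$ immediately enters $\mathrm{int}\,\cK$, so that Theorem~\ref{bushell} applies and projective convergence to $[v]$ is legitimate. A purely static alternative bypasses the dynamics entirely: identify $\calN_\sigma$ with $\ker w^\top$ for the left leading eigenvector $w$ (it is $(n-1)$-dimensional, $A_\sigma$-invariant, and transverse to $v$, hence equals $\calN_\sigma$ by uniqueness); observe that strict positivity of $A_\sigma$ on $\cK$ dualizes to strict positivity of $A_\sigma^\top$ on the dual cone $\cK^*$, so its leading eigenvector $w$ lies in $\mathrm{int}\,\cK^*$; this yields $w^\top x>0$ for every $x\in\cK\setminus\{0\}$, i.e. $\cK\cap\ker w^\top=\{0\}$.
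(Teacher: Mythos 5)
The paper states Proposition~\ref{prop:plane} without any proof (it is introduced only as ``an elementary necessary condition''), so there is no authorial argument to compare against; your proposal has to be judged on its own merits, and it holds up. Your most valuable observation is the one about hypotheses: as literally written, with ``invariant'' meaning only $A_\sigma\calK\subseteq\calK$, the claim is false --- your example $\mathrm{diag}(3,2,1)$ on the positive orthant, where $\calK\cap\calN_\sigma=\{0\}\times\re_+^2$, is a genuine counterexample --- and the surrounding text (the Basic test corollary, the search for a \emph{contracting} cone) confirms that strict positivity $A_\sigma(\calK\setminus\{0\})\subseteq\mathrm{int}\,\calK$ is the intended hypothesis. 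Under that reading your dynamic argument is sound: a nonzero $x\in\calK\cap\calN_\sigma$ has its orbit trapped in the closed invariant set $\calN_\sigma$ while Theorem~\ref{bushell} forces the rays $[A_\sigma^k x]$ to converge in the Hilbert metric to the Perron--Frobenius ray $[v]$, which lies in $\mathrm{int}\,\calK$ but not in $\calN_\sigma$ --- a contradiction. The only steps you assert rather than prove ($v\in\mathrm{int}\,\calK$, and that Hilbert-metric convergence on $\mathrm{int}\,\calK$ implies projective convergence) are standard Perron--Frobenius/Birkhoff facts consistent with what the paper itself takes for granted in Definition~\ref{def-splitting}. Your closing ``static'' alternative --- identifying $\calN_\sigma$ with $\ker w^{\top}$ for the left leading eigenvector $w$, and noting that strict positivity puts $w$ in $\mathrm{int}\,\calK^{*}$ so that $w^{\top}x>0$ on $\calK\setminus\{0\}$ --- is actually the more economical proof and directly yields the Basic test corollary as well; if you write this up, I would lead with that version and keep the contraction argument as a remark.
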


\begin{cor}[Basic test]
If a set of matrices $\cM=\{A_\sigma\}$ share a common contracting cone, then they all have a strictly dominant eigenvalue, and the corresponding eigenvector $v_\sigma$ does not belong to any $\calN_{\sigma'}$ for any $\sigma'\neq \sigma.$
\end{cor}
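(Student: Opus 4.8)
The plan is to obtain the corollary as an immediate consequence of Proposition \ref{prop:plane}, once the contraction hypothesis has been recast as strict positivity. First I would use the characterization in Theorem \ref{bushell}, $\gamma=\tanh\frac14 D^{A_\sigma}_\calK$: a common contracting cone means $\gamma<1$ for every $A_\sigma$, equivalently $D^{A_\sigma}_\calK<\infty$, equivalently (as noted right after the theorem) the strict inclusion $A_\sigma\calK\subseteq\mathrm{int}\,\calK$. In particular $A_\sigma\calK\subseteq\calK$, so $\calK$ is invariant for every member of the family, which is exactly the hypothesis needed to invoke Proposition \ref{prop:plane}.

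Strict $\calK$-positivity then yields, by the spectral characterization recalled at the end of Section \ref{sec:positivity} (Perron--Frobenius), a strictly dominant eigenvalue for each $A_\sigma$ with eigenvector $v_\sigma\in\mathrm{int}\,\calK$. This is the first claim of the corollary and, logically, it is also what makes the invariant splitting $(\calV_\sigma,\calN_\sigma)$ of Definition \ref{def-splitting} well defined in the first place. Applying Proposition \ref{prop:plane} to \emph{each} $A_{\sigma'}$ gives $\calK\cap\calN_{\sigma'}=\{0\}$ for every $\sigma'$. Since $v_\sigma\in\mathrm{int}\,\calK\subseteq\calK$ is nonzero, it cannot lie in $\calN_{\sigma'}$; specializing to $\sigma'\neq\sigma$ proves the second claim.

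I do not expect a serious obstacle here: the genuine work is carried by Proposition \ref{prop:plane}, which we may assume, and by Theorem \ref{bushell}. The only points deserving care are the chain of equivalences linking $\gamma<1$ to the interior inclusion $A_\sigma\calK\subseteq\mathrm{int}\,\calK$ (one direction is stated, the other follows from $\gamma=\tanh\frac14 D^{A_\sigma}_\calK$, since $D^{A_\sigma}_\calK=\infty$ forces $\gamma=1$), and the logical ordering noted above, whereby each splitting $\calN_{\sigma'}$ is only guaranteed to exist after strict positivity has secured a strictly dominant eigenvalue. Beyond this, the argument is pure bookkeeping.
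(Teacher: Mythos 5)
Your argument is correct and is exactly the route the paper intends: the paper leaves the corollary unproved as an immediate consequence of Proposition~\ref{prop:plane}, and your proposal supplies precisely the missing bookkeeping (contraction $\Rightarrow$ strict positivity $\Rightarrow$ Perron--Frobenius eigenvector $v_\sigma \in \mathrm{int}\,\calK$, then $\calK \cap \calN_{\sigma'} = \{0\}$ for every $\sigma'$ excludes $v_\sigma \in \calN_{\sigma'}$). Your remark on the logical ordering --- that the splittings $(\calV_{\sigma'},\calN_{\sigma'})$ are only well defined once strict dominance is established --- is a point the paper glosses over, and it is right to flag it.
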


 \subsection{Inner bound}
 
The basic idea of our algorithm below is to start from an inner bound, and proceed by forward iteration (i.e. apply our matrices to this inner bound) in order to enlarge it. 
For the initial inner bound, one can start with the convex hull of the leading eigenvectors of the matrices, which must be in any invariant cone. Given the set of leading eigenvectors $\{v_i\},$ it is not clear however whether to use $v_i$ or $-v_i$ in the initial inner bound.  We resolve this choice as follows: pick any matrix $A_\sigma$ and define $w$ as the normal vector to the invariant subspace   $\calN_\sigma$. Then for each leading eigenvector of the matrices $A_i$, pick the orientation $v_i$ such that $v_i^Tw>0.$
We formalize the argument in the following proposition.
\begin{prop}[The orientation trick] \label{prop-orientation} 
Suppose that\\ $A_1,A_2\in \re^{n\times n}$ have a common contracting cone $\cK,$ and note $v_1,v_2$ the leading eigenvectors of $A_1,A_2.$  Suppose  without loss of generality that $v_1 \in \cK.$  Then, with the notations of Definition \ref{def-splitting},  $v_2$ is also in $\cK$ if and only if $(w_1^Tv_1)\cdot (w_1^Tv_2)>0,$ where $w_1$ is the normal vector to $\calN_1.$   
\end{prop}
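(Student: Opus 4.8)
The plan is to reduce the whole statement to a single geometric fact: the linear functional $x \mapsto w_1^T x$ keeps a constant nonzero sign on $\calK \setminus \{0\}$. First I would establish this sign lemma. Since $w_1$ is normal to the hyperplane $\calN_1$, we have $w_1^T x = 0$ exactly when $x \in \calN_1$; by Proposition \ref{prop:plane} applied to $A_1$ we know $\calK \cap \calN_1 = \{0\}$, so $w_1^T x \neq 0$ for every $x \in \calK \setminus \{0\}$. To see the sign cannot flip, suppose $x,y \in \calK\setminus\{0\}$ had $w_1^T x > 0 > w_1^T y$. Convexity of $\calK$ places the whole segment $t x + (1-t)y$, $t \in [0,1]$, inside $\calK$, and the intermediate value theorem produces $t^\ast \in (0,1)$ with $w_1^T\!\left(t^\ast x + (1-t^\ast) y\right) = 0$, i.e.\ a point of $\calK \cap \calN_1$. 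That point is nonzero: were it zero, $y$ would be a negative multiple of $x$, forcing both $x$ and $-x$ into $\calK$ and contradicting pointedness. This contradicts Proposition \ref{prop:plane}, so $w_1^T$ has constant sign on $\calK\setminus\{0\}$.

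Because $v_1 \in \calK\setminus\{0\}$ by hypothesis, that constant sign is precisely the sign of $w_1^T v_1$. The forward implication is then immediate: if $v_2 \in \calK$, then $v_2 \in \calK\setminus\{0\}$ (an eigenvector is nonzero), so $w_1^T v_2$ shares the sign of $w_1^T v_1$ and the product $(w_1^T v_1)\cdot(w_1^T v_2)$ is positive.

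For the converse I would first invoke Perron--Frobenius theory to pin down the position of $v_2$ relative to $\calK$. Since $\calK$ is a contracting cone for $A_2$, we have $A_2 \calK \subseteq \mathrm{int}\,\calK$ (Theorem \ref{bushell}), so $A_2$ is a projective contraction on $\calK$ with a unique attracting fixed ray inside the cone, and that ray is the leading eigendirection of $A_2$. Hence exactly one of $v_2$ or $-v_2$ lies in $\calK$ (not both, by pointedness). Now assume $(w_1^T v_1)\cdot(w_1^T v_2) > 0$ and suppose, for contradiction, that $v_2 \notin \calK$; then $-v_2 \in \calK\setminus\{0\}$, so the sign lemma gives $w_1^T(-v_2)$ the sign of $w_1^T v_1$, whence $(w_1^T v_1)\cdot(w_1^T v_2) < 0$ --- a contradiction. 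Therefore $v_2 \in \calK$, which closes the equivalence.

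The only step that reaches beyond elementary convex geometry is the Perron--Frobenius input used in the converse, namely that a cone-contracting map forces its dominant eigenvector (up to sign) into the cone; I expect this to be the main point to justify carefully, although it is standard and already implicit in the projective-contraction discussion of Section \ref{section:behavior}. Everything else follows directly from convexity, pointedness, and Proposition \ref{prop:plane}.
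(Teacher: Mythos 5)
Your proof is correct and follows essentially the same route as the paper's: both rest on the dichotomy that $v_2$ or $-v_2$ must lie in $\calK$ (the Perron--Frobenius input) combined with the fact that $\calK\cap\calN_1=\{0\}$ plus convexity forces $w_1^T$ to have constant sign on $\calK\setminus\{0\}$, which the paper phrases as producing a positive combination $\alpha v_1+\beta v_2\in\calN_1$ when the signs disagree. Your write-up is somewhat more careful than the paper's (isolating the sign lemma, checking the intersection point is nonzero via pointedness, and handling the degenerate case $w_1^Tv_2=0$), but the underlying argument is the same.
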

\begin{proof} If: one has either $v_2\in \cK,$ or $-v_2 \in \cK.$  Now, if $(w_1^Tv_1)\cdot (w_1^Tv_2)>0,$ it means that $(w_1^Tv_1)\cdot (w_1^T(-v_2))<0,$ and then there exist $\alpha,\beta >0$ such that  $\alpha v_1+\beta (-v_2)\in \calN_1,$ and thus $(-v_2)$ cannot belong to $\cK.$\\ Only if: Suppose by contradiction that 
$(w_1^Tv_1)\cdot (w_1^Tv_2)<0.$  Then there exist $\alpha,\beta >0$ such that  $\alpha v_1+\beta v_2 \in \calN_1,$ and this contradicts $v_1,v_2$ being in $\cK,$ because $\cK\bigcap \calN_1 = \{0\}.$\end{proof}

By construction, the convex hull of leading eigenvectors (selected with the proper orientation) is an invariant cone. It thus provides an inner bound for the contracting cone $\cK$. However, the following proposition shows that this cone cannot be contracting, even if there exists a contracting cone. 
\begin{prop} \label{prop-nonstrict}Consider a set of matrices $\cM \in \re^{n \times n},$ and the set of leading eigenvectors $v_i \in \re^n$ of the matrices in $\cM.$ Suppose that
$$\cK_0=\mbox{conic-hull }{\bigcup_{A\in \cM^*,v_i}{A v_i}} $$ is a closed convex pointed cone\footnote{
$\calM^*$ is the set of all the products of matrices of $\calM$.}.  Then,  $\cK_0$ is an invariant cone, but not a \gammainv{} for any $\gamma <1.$ 
\end{prop}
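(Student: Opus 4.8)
The plan is to prove the two assertions separately: invariance follows immediately from the semigroup structure of $\cM^*$, whereas the failure of strict contraction I would argue by contradiction, exploiting the fact that a strictly positive map carries \emph{every} generator of $\cK_0$ into the open interior. For invariance, take any $A_\sigma\in\cM$ and any generator $Av_i$ of $\cK_0$, with $A\in\cM^*$. Since $A_\sigma A\in\cM^*$, the vector $A_\sigma(Av_i)=(A_\sigma A)v_i$ is itself a generator, hence lies in $\cK_0$. As $A_\sigma$ is linear and $\cK_0$ is closed under conic combinations and limits, $A_\sigma\cK_0\subseteq\cK_0$; $\sigma$ being arbitrary, $\cK_0$ is invariant.

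For the contraction claim, suppose $\cK_0$ were a \gammainv{} for some $\gamma<1$. Reading Theorem \ref{bushell} matrix by matrix (and the remark that $\gamma<1$ forces $D^{A_\sigma}_{\cK_0}<\infty$, i.e. $A_\sigma\cK_0\subseteq\mbox{int}\,\cK_0$), this makes every $A_\sigma$ strictly positive. I would first record the consequence that \emph{all} generators lie in the interior: each $v_i=\lambda_i^{-1}A_i v_i\in\mbox{int}\,\cK_0$ since the Perron eigenvalue $\lambda_i>0$, and for a nonempty product $A=A_{\sigma_1}\cdots A_{\sigma_m}$ one peels off the matrices one at a time, each step sending $\cK_0\setminus\{0\}$ into $\mbox{int}\,\cK_0$, so $Av_i\in\mbox{int}\,\cK_0$. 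Thus the whole generating set $S:=\{Av_i\}$ sits inside $\mbox{int}\,\cK_0$.

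The contradiction then comes from producing a boundary ray. Because $\cK_0$ is closed, pointed and (by the standing convention on cones) solid, its dual $\cK_0^{*}$ has nonempty interior; fixing $w\in\mbox{int}\,\cK_0^{*}$ yields $w^Tx>0$ on $\cK_0\setminus\{0\}$, so the cross-section $C:=\{x\in\cK_0:w^Tx=1\}$ is compact and convex, $\cK_0$ is the cone over $C$, and $C=\overline{\mathrm{conv}}(\hat S)$ with $\hat S$ the normalized generators. Choose an extreme point $b$ of $C$; it lies on $\partial\cK_0$. By Milman's partial converse to Krein--Milman, $b$ is a limit of \emph{single} normalized generators $\widehat{A^{(k)}v_{i_k}}$. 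The bare eigenvectors among them are interior, hence $\ne b$, so after discarding them each generator factors as $A_{\tau_k}r_k$ with $r_k\in S$. Using that $\Sigma$ is finite and $C$ is compact, I would pass to a subsequence with $\tau_k\equiv\sigma$ and $\hat r_k\to s_\infty\in C$, and conclude by continuity of $A_\sigma$ and of the normalization that $b=\widehat{A_\sigma s_\infty}\in A_\sigma\cK_0\subseteq\mbox{int}\,\cK_0$. This contradicts $b\in\partial\cK_0$, so no $\gamma<1$ can work.

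The main obstacle is precisely this last step. The naive intuition ``an eigenvector is a fixed boundary ray, so its matrix cannot be strictly positive'' is \emph{not} by itself correct: the dominant Perron eigenvector of an individual $A_i$ may genuinely lie in $\mbox{int}\,\cK_0$, and the truly extremal rays of $\cK_0$ can be eigenvectors of \emph{products} rather than of single matrices (their directions being the widest attractors of the linear semigroup). What rescues the argument is the interplay of Milman's theorem --- which guarantees that extremal boundary rays are limits of single generators, not merely of conic combinations --- with the finiteness of $\Sigma$ and the compactness of $C$, forcing at least one extremal boundary ray to equal the image of a cone point under one fixed $A_\sigma$, exactly what strict positivity forbids. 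A secondary point worth making explicit is that the statement tacitly assumes $\cK_0$ solid; otherwise the cone is degenerate and the interior, and hence the notion of contraction, must be interpreted relative to the subspace it spans.
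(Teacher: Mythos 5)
Your argument is correct, and it follows the same broad strategy as the paper's proof: assume for contradiction that every $A_\sigma$ maps $\mathcal{K}_0\setminus\{0\}$ into $\mathrm{int}\,\mathcal{K}_0$, produce a boundary point of $\mathcal{K}_0$ that must nevertheless be a limit of images $A_\sigma x'$ with $x'\in\mathcal{K}_0$, and conclude. The difference lies in how that boundary point is produced and approximated, and here your version is genuinely stronger. The paper picks some $x^*\in\mathcal{K}_0\setminus\mathrm{int}\,\mathcal{K}_0$ outside the conic hull of the $v_i$ and asserts that ``by definition of $\mathcal{K}_0$'' it is approximated by points of the form $Ax'$; as you correctly observe, a generic element of a conic hull is only a conic \emph{combination} of generators, so this step needs justification. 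Your choice of $b$ as an \emph{extreme} point of the compact cross-section $C=\overline{\mathrm{conv}}(\hat S)$, together with Milman's partial converse to Krein--Milman, is precisely what licenses approximation by \emph{single} generators, and your factoring $A^{(k)}v_{i_k}=A_{\tau_k}r_k$ plus the finiteness-of-$\Sigma$ and compactness-of-$C$ extraction then does the job that the paper's remark that $\mathcal{M}\mathcal{K}_0$ is ``a finite union of closed sets strictly contained in $\mathcal{K}_0$'' is meant to do. Your preliminary observation that, under the contradiction hypothesis, all generators (including the bare $v_i$) land in $\mathrm{int}\,\mathcal{K}_0$ replaces the paper's dichotomy on whether $\mathcal{K}_0$ coincides with the conic hull of the eigenvectors alone. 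Two caveats you share with the paper rather than introduce: both proofs identify ``$\gamma$-contracting for some $\gamma<1$'' with strict positivity $A_\sigma(\mathcal{K}_0\setminus\{0\})\subseteq\mathrm{int}\,\mathcal{K}_0$, which is the intended reading but not literally forced by the Hilbert-metric definition for rank-deficient maps; and both need $\mathcal{K}_0$ solid, which you at least flag explicitly.
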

\begin{proof}
It is obvious that $\cK_0$ is invariant by definition of $\cK_0.$  Now, let us suppose by contradiction that $\cM \cK_0 \subset \mbox{int } \cK_0.$ Since every $v_i$ is an eigenvector of some matrix $A\in \cM,$  $\mbox{conic-hull }\{v_i\}$ is not contracting. Thus, there exists some $x^* \in \cK_0\setminus \mbox{int }  \cK_0,$ $x^*\notin \mbox{conic-hull }\{v_i\}.$  By definition of $\cK_0,$ for any $\epsilon>0,$ there is a $x'\in \cK_0,\ A\in \cM$ such that $|Ax'-x^*|<\epsilon.$  This is in contradiction with the fact that $\cM \cK_0 \subset \mbox{int } \cK_0,$   $\cM \cK_0 $ being a finite union of closed sets strictly contained in $\cK_0.$
\end{proof}

\begin{example}Consider the set of matrices $\cM=\{A_1,A_2\}.$

$$
A_1
=
\mymatrix{cc}{
    2  &       0 \\
    1.65   & 0.5 \\
}
\qquad
A_2
=
\mymatrix{cc}{
    2    &    0 \\
    1.3636  &  0.5\\
}.
$$
 The leading eigenvectors are $v_1=[1.1, 1]^T$ and $v_2=[1 , 1.1]^T.$ 
$\cM$ possesses an invariant cone, which is $$\mbox{conic-hull }\{v_i\}.$$
However, the cone $\cK_\epsilon =\mbox{conic-hull }\{[\epsilon ,1]^T,[1,0]^T\},$ for $\epsilon >0$ small enough, is contracting (take for instance $\epsilon =0.1$). If an algorithm proceeds by forward iteration starting from $\mbox{conic-hull }\{v_i\},$ it remains stuck in the cone delimited by these two vectors: $$\cK_0=\mbox{conic-hull }{\bigcup_{A\in \cM^*,v_i}{A v_i}}= \mbox{conic-hull }{ \{v_i\}}. $$ That is, $\cK_0$ is an invariant cone, but not contracting. It is however included in the contracting cone $\cK_\epsilon.$
\end{example}

\subsection{Algorithm with guaranteed termination and accuracy} \label{sec:RJ_alg}

In any contracting cone, Theorem \ref{bushell}  implies a uniform upper bound
$D^A_\calK$ on the distance between two points in $A\cK.$ This bound on the distance  is useful to build a \emph{larger inner bound on the contracting cone $\cK$}: indeed, an upper bound on the distance between two points translates geometrically into a lower bound on the distance between any of these points and the boundary of the considered cone (see Definition \ref{def:hilbert} of the Hilbert metric, and the proof of Lemma \ref{lem:inflate} below).  Thus, we can leverage this information in order to inflate the cone, by `pushing the boundaries' of our inner bound.  We formalize this in the next lemma:

\begin{lem}
\label{lem:inflate}
Let $\calK$ be a $\gamma$-contracting cone for a set of linear maps $\calM,$ and take a matrix $A\in \cM.$ 
For any  $x,y \in \calK$, consider $x'=Ax,\ y'=Ay.$  Suppose that $x'-y' \notin \calK;$ then, for any 
$\rho \geq \exp(D^A_\calK)$ we have that
$$y' + \frac{1}{\rho - 1}(y'-x') \in \calK.$$
\end{lem}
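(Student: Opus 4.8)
The plan is to convert the asserted cone membership into a one-dimensional comparison controlled by $M_\calK(x'|y')$, and then to bound that number using the contraction estimate $D^A_\calK$. First I would collect the $y'$ terms in the candidate point,
$$
y' + \frac{1}{\rho-1}(y'-x') \;=\; \frac{1}{\rho-1}\bigl(\rho\,y' - x'\bigr),
$$
which is valid because $\rho \ge \exp(D^A_\calK) > 1$, so $\rho-1>0$. Since $\calK$ is a convex cone it is stable under multiplication by the positive scalar $1/(\rho-1)$, hence the claim is equivalent to $\rho\,y' - x' \in \calK$. Now the set $\{\lambda \mid \lambda y' - x' \in \calK\}$ appearing in the definition of $M_\calK(x'|y')$ is a closed upper ray (it is upward closed because $y'\in\calK$, and closed because $\calK$ is), so proving $\rho\,y'-x'\in\calK$ reduces to the single scalar inequality $\rho \ge M_\calK(x'|y')$.

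It remains to establish $M_\calK(x'|y') < \rho$, and this is exactly where the hypothesis $x'-y'\notin\calK$ is used. By the same reasoning, the set $\{\mu \mid x' - \mu y' \in \calK\}$ defining $m_\calK(x'|y')$ is a closed lower interval, and the hypothesis says precisely that $\mu=1$ is not admissible; therefore $m_\calK(x'|y') < 1$. On the other side, since $x,y\in\calK$ and $\calK$ is $\gamma$-contracting for $A$, the definition of $D^A_\calK$ (cf.\ Theorem \ref{bushell}) yields $d_\calK(x',y') = d_\calK(Ax,Ay) \le D^A_\calK$, i.e.
$$
\frac{M_\calK(x'|y')}{m_\calK(x'|y')} \;=\; \exp\bigl(d_\calK(x',y')\bigr) \;\le\; \exp(D^A_\calK).
$$
Multiplying by $m_\calK(x'|y')<1$ gives $M_\calK(x'|y') \le \exp(D^A_\calK)\,m_\calK(x'|y') < \exp(D^A_\calK) \le \rho$, which is the required inequality.

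The step I expect to need the most care is the estimate $d_\calK(Ax,Ay)\le D^A_\calK$, because the supremum defining $D^A_\calK$ ranges over $\mathrm{int}\,\calK$ whereas $x,y$ are only assumed in $\calK$. I would resolve this by the usual approximation: for $x,y\neq 0$ strict positivity gives $x',y'\in\mathrm{int}\,\calK$, so $0<m_\calK(x'|y')$, $M_\calK(x'|y')<\infty$ and $d_\calK(x',y')$ is finite; approximating $x,y$ from $\mathrm{int}\,\calK$ and invoking continuity of the Hilbert metric on $\mathrm{int}\,\calK\times\mathrm{int}\,\calK$ then transfers the bound. Everything else is the sign and open/closed bookkeeping of the sets defining $M_\calK$ and $m_\calK$, which is routine.
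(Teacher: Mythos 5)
Your proposal is correct and follows essentially the same route as the paper's proof: bound $m_\calK(x'|y')<1$ from the hypothesis $x'-y'\notin\calK$, deduce $M_\calK(x'|y')\le \exp(D^A_\calK)\le\rho$ via the contraction bound, conclude $\rho y'-x'\in\calK$, and rearrange. Your extra care about the supremum in $D^A_\calK$ being taken over $\mathrm{int}\,\calK$ is a reasonable refinement of a point the paper leaves implicit, but it does not change the argument.
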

\begin{proof}  From Definition \ref{def:hilbert} we have $0 \leq m(Ax|Ay) < 1$ since $Ax-Ay \notin \calK$.
Thus,
\begin{eqnarray*} 
M(Ax|Ay) &=& \frac{M(Ax|Ay)}{m(Ax|Ay)} m(Ax|Ay)\\ &\leq& \exp(D^A_\calK) m(Ax|Ay) \\&\leq &\exp(D^A_\calK).
\end{eqnarray*}
Furthermore,
$M(Ax|Ay)Ay - Ax \in \calK$, thus
$\rho Ay - Ax \in  \calK$
since $\rho \geq M(Ax|Ay).$
Finally, writing
$\rho A y - Ax + A y - Ay \in \calK,$ we obtain
$Ay + \frac{1}{\rho - 1}(Ay-Ax) \in \calK.$
\end{proof}

Lemma \ref{lem:inflate} provides a way to widen any inner bound of $\calK$ in such
a way that the widened cone is still a subset of $\calK$. Indeed, if an inner bound is not contractive, we can use the lemma to widen its boundary slightly outwards before pursuing the forward iteration algorithm.  
Lemma \ref{lem:inflate} and Lemma \ref{lem:lambda} below are at the core of 
Algorithm \ref{algo-cone}, which decides in finite time whether a given set of matrices 
has a common $\gamma$-contracting cone, as clarified in Theorem \ref{thm-algo}.

\begin{lem}
\label{lem:lambda}
Let $\calK$ be a cone in $\re^n,$ $x,y \in \mbox{int } \calK$, and suppose that $d_\calK(x,y)>0.$ Then, for any $(n-1)-$dimensional hyperplane $H$ such that $H\bigcap \cK=\{0\},$ there exists a $\lambda>0$ such that $y-\lambda x \in H.$
\end{lem}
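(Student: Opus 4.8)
The plan is to reduce the statement to a one–dimensional computation once $H$ is written in terms of its normal vector. Since $\calK$ is a cone we have $0\in\calK$, so the hypothesis $H\cap\calK=\{0\}$ forces $0\in H$; being an $(n-1)$-dimensional hyperplane through the origin, $H$ is a linear subspace, and I would write $H=\{z\in\re^n : w^Tz=0\}$ for some $w\neq 0$. The candidate scalar is then immediate: the requirement $y-\lambda x\in H$ reads $w^Ty=\lambda\,(w^Tx)$, so I would set $\lambda:=w^Ty/(w^Tx)$ and only need to check that it is well defined and positive.

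The crux is therefore to show that, after fixing the sign of $w$, one has $w^Tz>0$ for every $z\in\calK\setminus\{0\}$. First I would observe that $w^T$ cannot vanish on $\calK\setminus\{0\}$: any such zero would lie in $(H\cap\calK)\setminus\{0\}$, contradicting $H\cap\calK=\{0\}$. Then I would show the sign is constant. Suppose instead there were $u,v\in\calK$ with $w^Tu>0>w^Tv$; the conic combination $q:=|w^Tv|\,u+|w^Tu|\,v$ lies in $\calK$ because $\calK$ is a convex cone, and a direct computation gives $w^Tq=|w^Tv|(w^Tu)+|w^Tu|(w^Tv)=0$. Moreover $q\neq 0$, for $q=0$ would make $u$ a negative multiple of $v$, contradicting pointedness ($-\calK\cap\calK=\{0\}$). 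Hence $q\in(H\cap\calK)\setminus\{0\}$, again a contradiction. So $w^T$ has a fixed strict sign on $\calK\setminus\{0\}$, which I normalize to be positive. I expect this constant-sign step — the only place where pointedness and convexity of $\calK$ are genuinely used — to be the one real obstacle; everything else is bookkeeping.

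With this in hand the conclusion is immediate. Since $x,y\in\mathrm{int}\,\calK\subseteq\calK\setminus\{0\}$ we have $w^Tx>0$ and $w^Ty>0$, so $\lambda=w^Ty/(w^Tx)$ is a well-defined positive number and $w^T(y-\lambda x)=w^Ty-\lambda\,(w^Tx)=0$, i.e. $y-\lambda x\in H$. Finally I would invoke the standing hypothesis $d_\calK(x,y)>0$: since the Hilbert metric separates distinct rays, $d_\calK(x,y)>0$ means $y$ is not a positive multiple of $x$, whence $y-\lambda x\neq 0$, so the produced point is a genuine nonzero element of $H$ — which is what the later use of the lemma requires.
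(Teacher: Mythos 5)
Your proof is correct, and it is slightly more explicit than the one in the paper. The paper disposes of this lemma in two lines by a continuity (intermediate-value) argument: for $\lambda$ small, $y-\lambda x\in\calK$ (since $y\in\mathrm{int}\,\calK$), for $\lambda$ large, $y-\lambda x\in-\calK$, hence by continuity the segment must cross $H$. That argument silently relies on exactly the fact you prove in your middle paragraph, namely that $H\cap\calK=\{0\}$ forces $\calK\setminus\{0\}$ to lie strictly on one side of $H$ (otherwise a path from $\calK$ to $-\calK$ need not meet $H$ rather than merely re-entering one of the open half-spaces). Your version makes this separation step explicit via convexity and pointedness of $\calK$, and then solves for $\lambda=w^Ty/(w^Tx)$ in closed form, which also yields positivity of $\lambda$ for free; the paper's version is shorter but leaves both the separation and the positivity of the crossing parameter implicit. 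Your closing observation that $d_\calK(x,y)>0$ guarantees $y-\lambda x\neq 0$ correctly identifies the role of that hypothesis, which the paper's proof does not use at all even though it is what the application in Theorem~\ref{thm-algo} actually needs. In short: same underlying geometry, but your write-up supplies the one nontrivial step that the paper's proof takes for granted.
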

\begin{proof}
For $\lambda$ very small, we have $y-\lambda x \in \cK;$ for  $\lambda$ very large, we have $y-\lambda x \in -\cK.$  Thus, by continuity, there must be a $\lambda$ such that  $y-\lambda x \in H.$
\end{proof}

In the next theorem, we suppose that the matrices do not have zero eigenvalues, nor common invariant subspace.  
These are technical assumptions that hold for generic matrices. 

\begin{thm}\label{thm-algo} Consider a set of positive matrices $\cM$ with nonzero eigenvalues and no common invariant subspace. Given a contraction ratio $\gamma$, 
Algorithm \ref{algo-cone} decides in finite time whether the set of matrices has a common $\gamma$-contracting cone.
\begin{itemize}
\item
It returns a \gammainv{} provided that such a cone exists.
\item 
If there is no \gammainv{}, it returns `NO', or a $\delta-$invariant cone, for $\gamma<\delta <1,$ if it has found one.
\end{itemize}
\end{thm}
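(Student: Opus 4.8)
The plan is to establish the two defining properties of a decision procedure---\emph{correctness} of both outputs and \emph{finite termination}---by maintaining throughout Algorithm \ref{algo-cone} a single geometric invariant: the current inner bound $\calK_{\mathrm{in}}$ is contained in \emph{every} \gammainv{} of $\cM$ whenever such a cone exists. I would set up this invariant first, derive the ``YES'' output from it, and then attack termination, which I expect to be the genuine difficulty.

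\textbf{The inner-bound invariant.} The initialization produces $\calK_{\mathrm{in}} = \mbox{conic-hull}\{v_i\}$, the hull of the leading eigenvectors oriented by the trick of Proposition \ref{prop-orientation}; by that proposition every common contracting cone $\calK^*$ contains all the $v_i$, hence contains $\calK_{\mathrm{in}}$ (which by Proposition \ref{prop-nonstrict} is itself never contracting, motivating the inflation step below). I would then check that the two operations that grow $\calK_{\mathrm{in}}$ preserve the inclusion $\calK_{\mathrm{in}} \subseteq \calK^*$. Forward iteration replaces $\calK_{\mathrm{in}}$ by $\mbox{conic-hull}(\calK_{\mathrm{in}} \cup \bigcup_\sigma A_\sigma \calK_{\mathrm{in}})$, which stays inside $\calK^*$ because $A_\sigma \calK^* \subseteq \calK^*$. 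Inflation is exactly Lemma \ref{lem:inflate}: using the diameter bound $D^A_{\calK^*} \leq 4\,\mathrm{arctanh}\,\gamma$ furnished by Theorem \ref{bushell} (so that $\rho := \exp(4\,\mathrm{arctanh}\,\gamma)$ is a fixed number once $\gamma$ is given), the inflated rays $y' + \tfrac{1}{\rho-1}(y'-x')$ provably remain in $\calK^*$, with Lemma \ref{lem:lambda} invoked to locate the pushed boundary points against separating hyperplanes such as the $\calN_\sigma$ of Definition \ref{def-splitting}. Thus the invariant is preserved at every step.

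\textbf{Correctness of the outputs.} For the ``YES'' branch, at each stage I would test whether the current $\calK_{\mathrm{in}}$ is itself \gammainv{}; this reduces to evaluating the Hilbert-metric diameters $D^{A_\sigma}_{\calK_{\mathrm{in}}}$ on the finitely many extreme rays and comparing $\tanh(\tfrac14 D^{A_\sigma}_{\calK_{\mathrm{in}}})$ to $\gamma$ via Theorem \ref{bushell}. If the test passes, the invariant guarantees that $\calK_{\mathrm{in}}$ is a legitimate \gammainv{}, so returning it is correct. For the ``NO'' branch, I would argue that when no \gammainv{} exists the growth of $\calK_{\mathrm{in}}$ is no longer confined: either the basic test (the Corollary following Proposition \ref{prop:plane}) already fails, or the inflation eventually drives $\calK_{\mathrm{in}}$ to violate pointedness, i.e. $\calK_{\mathrm{in}} \cap (-\calK_{\mathrm{in}}) \neq \{0\}$, which by the invariant is impossible as long as a \gammainv{} survives; detecting the loss of pointedness therefore certifies non-existence. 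A contracting cone of weaker ratio found en route is reported as the $\delta$-invariant cone with $\gamma < \delta < 1$.

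\textbf{Termination (the main obstacle).} The hard part will be bounding the number of inflation/iteration rounds. I would fix a compact affine slice $S$ of $\calK^*$ and measure progress by the volume of $\calK_{\mathrm{in}} \cap S$. The crucial quantitative input is that $\rho$ is \emph{bounded away from} $1$ by the fixed constant above, so each inflation acting on a genuinely non-contracting boundary pushes that boundary outward by an amount bounded below, increasing the slice volume by at least a fixed $\varepsilon > 0$; since all volumes are trapped below $\mathrm{vol}(\calK^* \cap S) < \infty$, only finitely many effective inflations can occur. I would combine this with a covering argument showing that only finitely many extreme rays are ever created, so that the diameter test of the previous paragraph is a finite computation at each round. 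When no \gammainv{} exists there is no confining $\calK^*$, and the same minimal-progress estimate instead forces $\calK_{\mathrm{in}}$ to lose pointedness (or to certify a $\delta$-cone) in finitely many steps, yielding termination of the ``NO'' branch as well. The delicate point throughout is keeping the per-step progress uniformly bounded below despite the freedom in choosing which facet to inflate; the explicit dependence $\rho = \exp(4\,\mathrm{arctanh}\,\gamma)$ from Theorem \ref{bushell} is precisely what makes this uniform bound available.
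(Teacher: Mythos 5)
Your overall skeleton -- maintain the invariant that the current inner bound is contained in every \gammainv{} $\cK^*$, grow it by forward iteration and by the inflation of Lemma \ref{lem:inflate} with $\rho=\exp(D^A_{\cK^*})$ bounded via Theorem \ref{bushell}, and read off correctness of the YES/NO outputs from that invariant -- is exactly the paper's. The divergence, and the problem, is in the termination argument, which you rightly flag as the crux. Your claim is that each effective inflation increases the volume of a fixed affine slice of $\calK_{\mathrm{in}}$ by a uniform $\varepsilon>0$ because $\rho$ is bounded away from $1$. But the displacement produced by one inflation is $\frac{1}{\rho-1}(y'-x')$, and while the factor $\frac{1}{\rho-1}$ is indeed fixed, the vector $y'-x'$ is not bounded below: after rescaling $x$ so that $Ay-Ax\in\calN_\sigma$ (the step where Lemma \ref{lem:lambda} is used), the length of $Ay-Ax$ relative to $|Ay|$ tends to $0$ as the rays $[Ax]$ and $[Ay]$ approach each other. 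Nothing in the algorithm prevents the selected pairs from degenerating in this way as $\calK_t$ grows, so the volume increments can form a convergent series and your argument does not exclude convergence to an invariant, non-contracting limit cone. The fixed $\rho$ alone does not ``make the uniform bound available''; you would need an additional lower bound on the separation of the inflated pairs, which you neither state nor prove.

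The paper closes this gap by a qualitative limit argument rather than a quantitative one: if the algorithm never produces a contracting cone, the increasing inner bounds converge (being trapped inside $\cK^*$) to a cone $\cK_\infty$ that is invariant but not contracting; taking a vertex $z$ with $Az$ on the boundary of $\cK_\infty$, the inflation step would push $Az$ outward unless $A$ maps all of $\cK_\infty$ into the face containing $Az$, and since $\cK_\infty$ is solid (no common invariant subspace) this forces $A$ to be singular, contradicting the nonzero-eigenvalue hypothesis. Note that this is precisely where the two standing assumptions of the theorem are consumed; your proposal never uses them, which is another symptom that the termination step is not actually established. Your treatment of the NO branch also differs in a minor way (you detect loss of pointedness, the paper detects a nontrivial intersection with some $\calN_\sigma$ via Proposition \ref{prop:plane}); the paper's criterion is the one wired into the algorithm, though admittedly the paper itself is brief about finite termination in that branch.
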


\begin{algorithm}[htbp]
\begin{center}
\includegraphics[width=0.9\columnwidth]{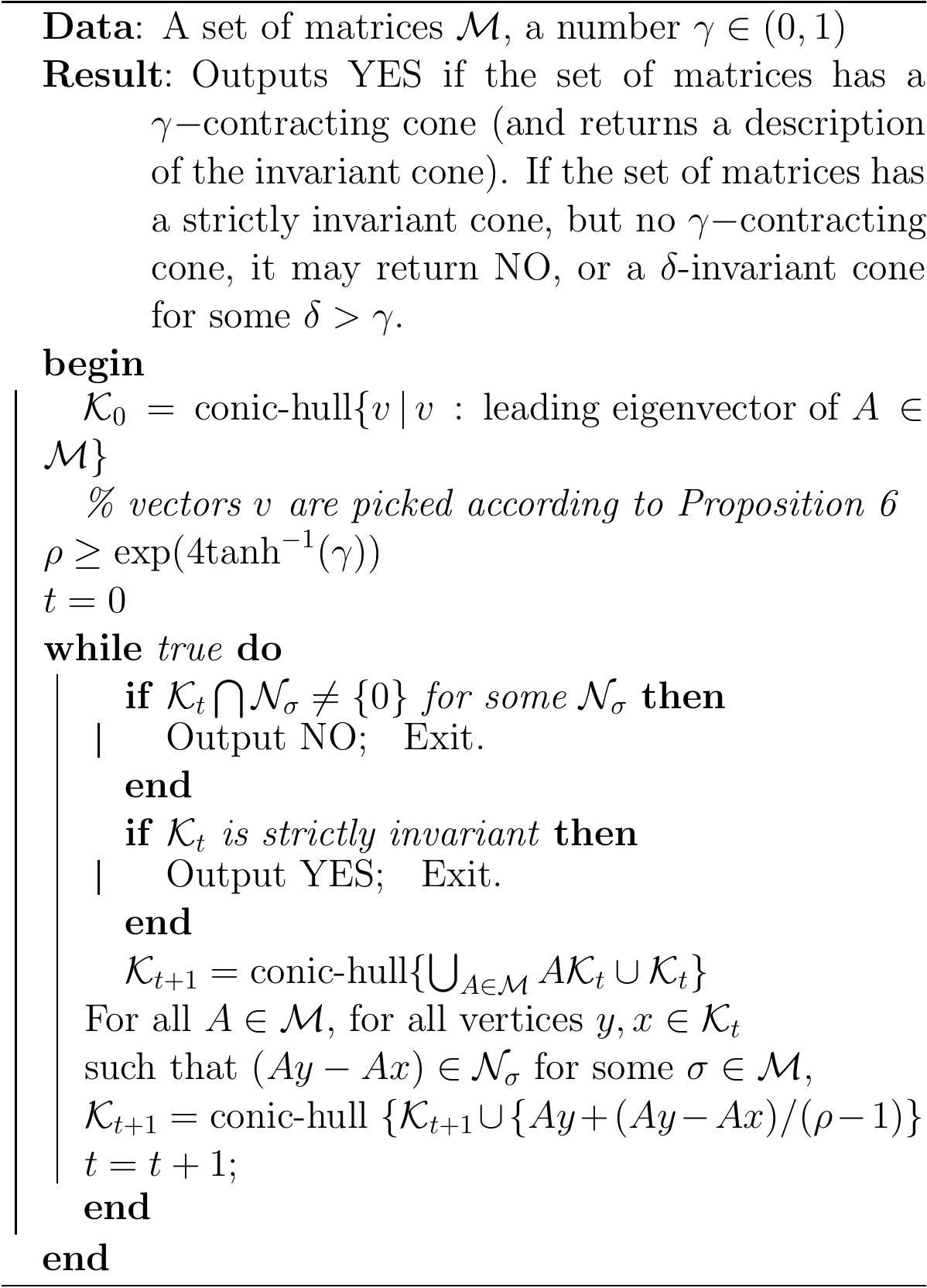}  
\end{center}
\caption{An algorithm for deciding joint positivity}
\label{algo-cone}
\end{algorithm}

\begin{proof}
The algorithm iteratively computes inner bounds $\calK_t$ for $\cK.$ We first prove that indeed $\calK_t$ are valid inner bounds (provided that there indeed exists a contracting cone $\calK$).  We then prove that one of these inner bounds $\calK_t$ must be contracting for some $t$ (and not only invariant).  Thus, the algorithm will terminate with an effective contracting cone.

The algorithm starts with $\cK_0$ as a first inner bound (if $\cK_0$ is not a convex pointed cone, then one directly concludes that the set of matrices does not have a common invariant cone).  Thus, suppose that $\cK_0$ is a valid inner bound by Proposition \ref{prop-nonstrict}.  Now, at every step, with an initial inner bound $\cK_t,$ the algorithm performs two operations. First, it takes the union of $\cK_t$ with all the images of this set $$\calK_{t+1} = \mbox{conic-hull}\{ \bigcup_{A\in\calM} A \calK_t \cup \calK_t \},$$ which is clearly still an inner bound, by definition of a contracting cone.  Then, for any two points $x,y\in \cK_{t},$ it adds the point $\{Ay+(Ay-Ax)/(\rho-1)\} $ to $\cK_{t+1}.$ \\ For practical efficiency, the true algorithm can only do it for points $Ay,Ax$ that are vertices of the new inner bound.  Also, one has to scale $x$ in order to ensure the condition $(Ay - Ax) \in \calN_\sigma,$ but, provided that $Ax$ and $Ay$ are not parallel, this is always possible by Lemma \ref{lem:lambda} above.  Finally note that there must always be an $Ay\in \cK_{t+1} \setminus \mbox{int} \cK_{t+1}$ (if not, $\cK_{t+1}$ would be contracting) and $Ax$ non-aligned with $Ay$ (because the matrix $A$ has nonzero eigenvalues).   In turn, the condition $(Ay - Ax) \in \calN_\sigma$ implies that $Ay-Ax \notin \cK$ (Proposition \ref{prop:plane}), and we can apply Lemma \ref{lem:inflate}.  These new added points are guaranteed to be in the invariant cone $\cK,$ by Lemma \ref{lem:inflate}, and this proves that $\cK_{t+1}$ is still an inner bound.\\
We now prove that this procedure generates a contracting cone after a finite number of steps (if there exists one).  Suppose the contrary.  Then, the inner bounds $\cK_t$ converge towards a cone $\cK_{\infty}$ which is invariant, but not contracting. 
Consider a vertex $z$ of $\cK_\infty,$ which is such that $Az \in \cK_\infty \setminus \mbox{int }\cK_\infty.$  That is, $Az$ is in the boundary of $\cK_{\infty}.$  This implies that the inflating step in the algorithm is such that for all $ x\in \cK_\infty,$ $Ax$ is in the same face of $\cK_\infty$ as $Az$ (because in the opposite case, the inflation step would `push' $Az$ out of $\cK_\infty,$ and $Az$ would not be a vertex anymore).  $\cK_\infty$ being of nonempty interior (because the matrices have no nontrivial invariant subspace), this implies that $A$ has zero eigenvalues, a contradiction.  

In conclusion, the algorithm cannot converge to a non contracting invariant cone.  Thus, if there exists a contracting  $\cK,$ since $\cK_t{}$ are valid inner bounds (i.e. contained in $\cK$), they will either converge to $\cK,$ or the algorithm will stop before (having found another contracting cone).  If, on the other hand, there is no invariant cone, the `inner bounds' will keep growing until they intersect some $\calN_\sigma,$ and the algorithm will stop, concluding that there is no \gammainv{}.
\end{proof}



\section{Conclusions and further directions.}

In this work, we have introduced the concept of path-complete positivity, which generalizes the notion of positivity.  We showed that this notion can be useful, for instance for (constrained) switching systems, for which we provide an example of system which is not positive, but yet, is path-complete positive. We showed that path-complete positive systems inherit much of the nice properties of positive systems, and we sketched an algorithm to decide whether a switching system has an invariant cone.\\
Our algorithm is inspired from the similar, and much more studied, problem of proving stability for switching systems.  It proceeds by \emph{forward propagation,} which is a well-known technique for proving stability of a switching system.  However, the positivity problem is more tricky, for several reasons: first, contrary to the stability problem, one cannot take an arbitrary norm for initializing a forward propagation procedure.  Second, the forward iteration converges by essence to a \emph{non contracting} invariant cone, forcing us to introduce an `inflation procedure' in order to generate a contracting cone.

We believe that path-complete positivity opens a number of directions
worth exploring. The first step is to decide path-complete positivity for a switching system, as mentioned above. In a second time, we plan to generalize this notion to more general dynamical systems, and link it to the notion of differential positivity.

Path-complete positivity may also prove useful for computational goals, even for systems that \emph{do have a common contracting cone,} that is, that are positive.  Just like path-complete stability has been used as a proxy to prove stability in the control literature, it might be easier to prove path-complete positivity than to prove positivity.

\end{document}